\documentclass{ifacconf}

\usepackage{graphicx}
\usepackage{natbib}
\usepackage{amsmath,amssymb,amsfonts}
\usepackage{mathtools}
\usepackage{bm}
\usepackage{booktabs}
\usepackage{tikz}
\usetikzlibrary{automata,positioning,arrows.meta,shapes.misc,calc}
\usepackage{pifont}
\newcommand{\xmark}{\ding{55}}%

\newtheorem{theorem}{Theorem}
\newtheorem{lemma}[theorem]{Lemma}

\newtheorem{definition}{Definition}
\newtheorem{remark}{Remark}
\newtheorem{assumption}{Assumption}
\newtheorem{condition}{Condition}

\newcommand{\R}{\mathbb{R}}
\newcommand{\N}{\mathbb{N}}

\newcommand{\E}{\mathbb{E}}
\newcommand{\Prob}{\mathbb{P}}
\newcommand{\Ltwo}{\mathcal{L}_2}

\newcommand{\xtilde}{\tilde{x}}

\DeclareMathOperator{\He}{He}

\newif\ifFullVersion
\FullVersiontrue  

\begin{document}
\begin{frontmatter}

\ifFullVersion
  \title{Expected String Stability of Human-Led Vehicle Platoons under Stochastic Communication Delays (Full Version)}
\else
  \title{Expected String Stability of Human-Led Vehicle Platoons under Stochastic Communication Delays}
\fi

\thanks[footnoteinfo]{This work was funded by the Chilean National Agency for Research
and Development (ANID) through the grants FONDECYT Initiation 11250715, FONDECYT Regular 1251406, and ANID CPS-RTC CIA250016.}

\author[A]{Francisco Aguilera}
\author[A]{Víctor Jaque}
\author[UAI]{Andrés A. Peters}
\author[A,B]{Alejandro I. Maass}

\address[A]{Department of Electrical Engineering, Pontificia Universidad Cat\'olica de Chile,  Santiago, 7820436, Chile. (email: \textnormal{\small \texttt{\{francisco.aguilera,victor.jaque,alejandro.maass\}@uc.cl)}}}
\address[UAI]{Faculty of Engineering and Sciences, Universidad Adolfo Ib\'a\~{n}ez, Pe\~{n}alol\'en, 7941169, Santiago, Chile. (email: \textnormal{\small \texttt{andres.peters@uai.cl)}}}  
\address[B]{Cyber-Physical Systems Research and Technology Center, Pontificia Universidad Cat\'olica de Chile,  Santiago, 7820436, Chile}

\begin{abstract} 
This paper studies expected $\mathcal{L}_2$ string stability of event-triggered vehicle platoons in which a human driver leads a chain of cooperatively controlled autonomous followers under stochastic communication delays. The leader's driving behavior propagates through the string via vehicle-to-vehicle (V2V) communication, so human-induced disturbances must not amplify along the platoon. Unlike deterministic approaches based on worst-case delay bounds, we derive string-stability conditions depending on the full delay distribution through integral inequalities. The closed-loop platoon is modeled as a stochastic hybrid system capturing vehicle dynamics, communication events, and event-triggering. This framework certifies string stability even when delays exceed deterministic admissible bounds with nonzero probability. Results are evaluated under several delay distributions using the MATLAB HyEQ simulator.
\end{abstract}
\begin{keyword}
Multi-Vehicle Systems, Cooperative Adaptive Cruise Control, Stochastic Delays, String Stability
\end{keyword}

\end{frontmatter}

\section{Introduction}
\label{sec:intro}

Vehicle platooning is a representative cyber-physical human system that integrates sensing, computation, communication, and control to coordinate a human-driven leader with a chain of autonomous followers in real time \cite{jia2015survey}. By enabling short inter-vehicle distances, platoons can improve road capacity, fuel efficiency, and safety. However, safe operation requires more than individual vehicle stability: the human leader's driving behavior (acceleration and braking maneuvers) must not amplify as disturbances along the autonomous string, a property known as string stability \cite{feng2019string}.

Cooperative adaptive cruise control (CACC) achieves string-stable platooning at small inter-vehicle distances using vehicle-to-vehicle (V2V) communication to transmit signals such as acceleration from preceding vehicles \cite{milanes2013cooperative}. However, the resulting network-induced imperfections, especially nonzero communication delays, distort the leader's driving signals as they propagate through the string and can compromise string stability, and must therefore be explicitly considered in CACC analysis and design \cite{zhang2023impacts}.

String stability under delayed V2V communication has been extensively studied under deterministic assumptions, yielding conditions for constant or bounded delays; see, e.g., \citep{gao2016robust,Dolk2017,elahi2024fixed,samii2024simultaneous}. However, in practical wireless networks, delays are inherently stochastic and may occasionally exceed deterministic design thresholds, making worst-case delay bounds overly conservative and motivating the study of string stability under stochastic communication delays.

Only a few works have addressed this setting. Early studies \citep{Qin2014,Qin2015ACC} considered connected cruise control with intermittent V2V communication, packet losses, and stochastic delays, with \citep{Qin2015ACC} deriving mean and covariance dynamics to analyze plant and string stability. However, delays are modeled as integer multiples of a fixed sampling period, restricting the analysis to discrete delay distributions. The work \citep{Wang2020CCC} proposed an LQ-based optimal controller that accounts for delays, packet losses and state noise while reducing headway and velocity deviations, but does not provide string stability guarantees.  More recently, \cite{pan2026string} derived delay-distribution-dependent conditions for non-uniform stochastic delays, where the delay law enters only through an interval-based abstraction: the range is partitioned into intervals and Bernoulli variables encode membership, so the analysis depends on the distribution solely through interval probabilities. In contrast, we let the distribution enter the certificate directly through an integral with respect to its probability measure.

In this paper, we derive string stability conditions that depend on the delay distribution through integral conditions with respect to its probability measure. Adopting a hybrid systems formalism, the analysis captures continuous-time vehicle dynamics and discrete events induced by transmissions, receptions, and event-triggering. Building on \cite{Dolk2017} and the stochastic hybrid systems framework of \cite{Schlotterbeck2024}, we relax the worst-case MAD (Maximum Allowable Delay) requirement and allow delays to exceed the deterministic MAD with nonzero probability while preserving stability guarantees, ensuring that human-induced disturbances from the leader are safely attenuated across the autonomous string even under realistic stochastic network conditions.

\emph{Notation:} Let $\R$, $\R_{\ge 0}$, and $\N$ denote the real,
nonnegative real, and nonnegative integer sets, respectively, and let
$\|\cdot\|$ be the Euclidean norm. For symmetric $M$,
$M \succ 0$ ($M \succeq 0$) and $M \prec 0$ ($M \preceq 0$) denote
positive and negative definiteness (semidefiniteness). For a matrix $A$,
$\He(A):=A+A^\top$. For $w:\R_{\ge 0}\to\R^n$,
$\|w\|_{\Ltwo,[0,T]}^2 := \int_0^T \|w(t)\|^2\,dt$.
The expectation and probability operators are $\E[\cdot]$ and
$\Prob(\cdot)$. For a probability measure $\mu$,
$\int f(v)\,\mu(dv)=\E[f(V)]$ for $V\sim\mu$.

\section{Problem setting}
\label{sec:problem_setting}


\begin{figure}
    \centering
    \includegraphics[width=0.95\linewidth]{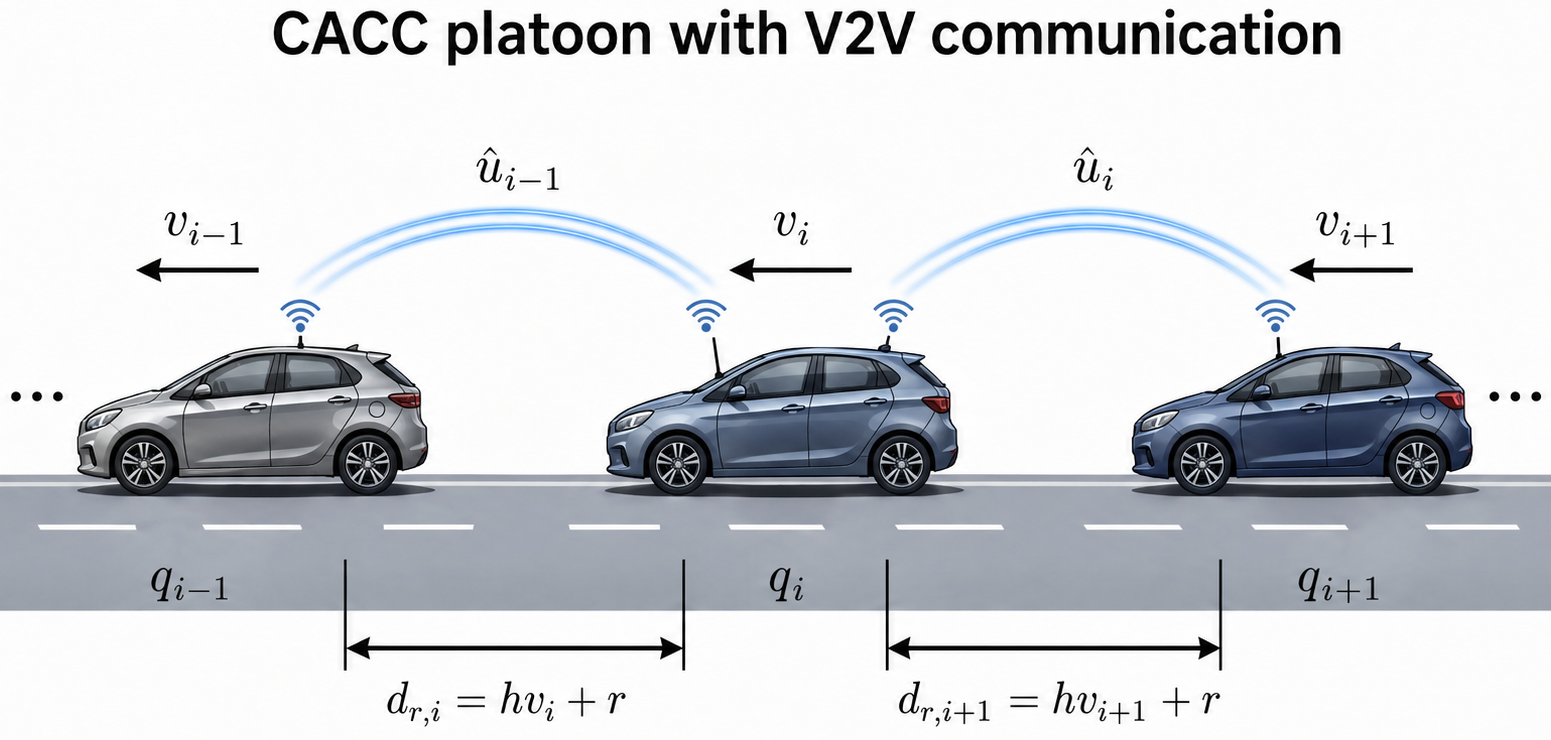}
    \caption{Schematic of the predecessor-following platoon model.}
    \label{fig:platoon-block}
\end{figure}
\subsection{CACC closed-loop platoon model}

Consider a platoon of $N+1$ vehicles indexed $i = 0, 1, \ldots, N$, depicted in Fig.~\ref{fig:platoon-block}, where vehicle~$0$ is a human-driven leader and each follower $i \geq 1$ implements CACC using on-board sensors and the predecessor's acceleration received via V2V~\citep{Dolk2017}. Each vehicle has a first-order actuator (time constant $\tau_d > 0$) and a velocity-damping coefficient $\alpha \geq 0$ aggregating rolling resistance and aerodynamic drag about the cruise speed:
\begin{equation}
    \dot{v}_i = -\alpha v_i + a_i, \qquad \dot{a}_i = -\frac{1}{\tau_d} a_i + \frac{1}{\tau_d} u_i.
    \label{eq:vehicle_dynamics}
\end{equation}
Including this coefficient yields a more realistic model than the pure integrator of~\cite{Dolk2017, Ploeg2014}, recovered when $\alpha = 0$. A constant time-gap policy with headway $h > 0$ defines the desired inter-vehicle distance $d_{r,i}(t) := h\,v_i(t) + r$, and the spacing error is
\begin{equation}
    e_i := q_{i-1} - q_i - L_i - d_{r,i}, \quad
    \dot{e}_i = (v_{i-1} - v_i) - h\dot{v}_i,
    \label{eq:spacing_error}
\end{equation}
where $q_i$ is the position of vehicle~$i$ and $L_i$ its length. Substituting~\eqref{eq:vehicle_dynamics} into~\eqref{eq:spacing_error}, we can write
\begin{equation}
    \dot{e}_i = v_{i-1} - v_i(1 - h\alpha) - ha_i.
    \label{eq:edot}
\end{equation}
%
In terms of controller, we retain the widely adopted CACC structure from~\cite{Dolk2017, Ploeg2014} 
\begin{equation}
    \chi_i = k_p e_i + k_d \dot{e}_i + \hat{u}_{i-1}, \quad 
    \dot{u}_i = -\frac{1}{h}u_i + \frac{1}{h}\chi_i
    \label{eq:controller}
\end{equation}
with gains $k_p, k_d > 0, \quad k_d-k_p\tau_d>0$ (\cite{Dolk2017}) and predecessor desired acceleration 
$\hat{u}_{i-1}$ sent via V2V communication subject to delays. 

Since the stochastic delays affect the dynamics of the transmitted variable $\hat{u}_{i-1}$, a useful variable to define is the so-called \emph{network-induced error} defined as
\begin{equation*}
    e_{u_{i-1}} :=\hat{u}_{i-1}-u_{i-1}. \label{eq:network_induced_error}
\end{equation*}
The received acceleration $\hat{u}_{i-1}$ is held between samples via a zero-order hold (ZOH), i.e., $\dot{\hat{u}}_{i-1} = 0$. The leader vehicle $i = 0$ is operated by a human driver whose acceleration input is available instantaneously to their own vehicle, i.e.\ $\hat{u}_0 = u_0$, so no network-induced error is present.

We define the state for vehicle pair $(i-1,\,i)$:
\begin{equation*}
    \tilde{x}_i := \bigl[v_{i-1},\, a_{i-1},\, u_{i-1},\, e_i,\, v_i,\, a_i,\, u_i\bigr]^T \in \mathbb{R}^7.
    \label{eq:state}
\end{equation*}
The closed-loop matrices are given by $A_{11} := A + EC$, $A_{12} := E$, $A_{13} := B$,
and combining~\eqref{eq:vehicle_dynamics}--\eqref{eq:controller}, the augmented dynamics under CACC are
\begin{equation*}
    \dot{\tilde{x}}_i = A_{11}\tilde{x}_i + A_{12}e_{u_{i-1}} + 
    A_{13}\chi_{i-1},
    \label{eq:cl_dynamics}
\end{equation*}
where 
\begin{equation*}
A =
\begin{bmatrix}
-\alpha & 1 & 0 & 0 & 0 & 0 & 0 \\
0 & -1/\tau_d & 1/\tau_d & 0 & 0 & 0 & 0 \\
0 & 0 & -1/h & 0 & 0 & 0 & 0 \\
1 & 0 & 0 & 0 & -(1-h\alpha) & -h & 0 \\
0 & 0 & 0 & 0 & -\alpha & 1 & 0 \\
0 & 0 & 0 & 0 & 0 & -1/\tau_d & 1/\tau_d \\
k_d/h & 0 & 0 & k_p/h & -k_d(1/h-\alpha) & -k_d & -1/h
\end{bmatrix}
\end{equation*}
\begin{align*}
B &=
\begin{bmatrix} 0 & 0 & 1/h & 0 & 0 & 0 & 0 \end{bmatrix}^\top, \
E =
\begin{bmatrix} 0 & 0 & 0 & 0 & 0 & 0 & 1/h \end{bmatrix}^\top, \\
C &= \begin{bmatrix} 0 & 0 & 1 & 0 & 0 & 0 & 0 \end{bmatrix}.
\end{align*}
String stability is measured by the controller forcing signal $\chi_i$ 
which, after substituting~\eqref{eq:edot} and $\hat{u}_{i-1} = 
u_{i-1} + e_{u_{i-1}}$, can be written as
\begin{equation*}
    \chi_i = C_z\tilde{x}_i + D_z e_{u_{i-1}},
    \label{eq:chi_state}
\end{equation*}
with $C_z = \begin{bmatrix}      
    k_d  & \ 0 & \ 1 &\ k_p & \ -k_d(1-h\alpha) &\ -k_d h &\ 0 
    \end{bmatrix}$ and $D_z = 1$.

\subsection{Network layer}
To model the wireless V2V link, we consider event-triggered transmissions subject to stochastic communication
delays. In particular, transmissions are generated by a static event-triggering rule evaluated
at the sender, vehicle~$i-1$. An event is triggered when
\begin{equation*}
    \|e_{u_{i-1}}\|^2 \ge
    \sigma\Bigl[\rho\|u_{i-1}\|^2+
    \tfrac{1}{h^2}\|u_{i-1}-\chi_{i-1}\|^2\Bigr],
    \ \sigma,\rho>0,
\end{equation*}
where all quantities are locally available at vehicle~$i-1$
\citep[Remark~6]{Dolk2017}. A \emph{minimum inter-event time} (MIET) $\bar\tau_m>0$ enforces a minimum
time between transmissions, while a \emph{maximum allowable transmission interval} (MATI) $\bar\tau_s>\bar\tau_m$
guarantees that a transmission occurs at least every $\bar\tau_s$
seconds. Hence, the transmission instants satisfy
\begin{equation}\label{eq:tk}
    t_{k+1}^{i-1}:=
    \inf\Bigl\{t\ge t_k^{i-1}+\bar\tau_m \mid
    \Gamma(t)\le 0 \ \text{or}\ t-t_k^{i-1}=\bar\tau_s
    \Bigr\},
\end{equation}
where
\begin{equation}\label{eq:Gamma}
    \Gamma(t):=
    \sigma\Bigl[\rho\|u_{i-1}\|^2+
    h^{-2}\|u_{i-1}-\chi_{i-1}\|^2\Bigr]
    -\|e_{u_{i-1}}\|^2 .
\end{equation}
Unlike deterministic MAD-based approaches~\citep{Dolk2017,Heemels2010}, 
delays are drawn i.i.d.\ from a probability measure $\mu$ on $\R_{\ge 0}$,
which may assign positive probability to values exceeding the deterministic 
MAD. Specifically, at each transmission time $t_k^{i-1}$, a delay
$v_k^{i-1}\sim\mu$ is drawn and the received signal is updated as
\begin{equation}\label{eq:uhat_update}
    \hat u_{i-1}\bigl((t_k^{i-1}+v_k^{i-1})^+\bigr)
    =u_{i-1}(t_k^{i-1}),
\end{equation}
that is, vehicle~$i$ receives the value of $u_{i-1}$ sampled at 
the transmission instant $t_k^{i-1}$, delayed by $v_k^{i-1}$ seconds. 
Between updates, $\hat u_{i-1}$ is held constant.
To avoid packet reordering, we require the delay support to be bounded 
and smaller than the MATI, formalized as follows.
\begin{assumption}\label{ass:delay_dist}
    The delay distribution~$\mu$ has bounded support:
    $\mu([0,\bar v])=1$ for some $\bar v<\infty$, with $\bar v<\bar\tau_s$.\qed 
\end{assumption}
Finally, delays are assumed independent across links and independent of
the transmitted control signals.
\begin{assumption}\label{ass:independence}
    The delay sequences $\{v_k^{i-1}\}_{k\ge1}$ are independent across
    vehicle pairs $i=2,\ldots,N$, and independent of the predecessor's
    trajectory $\chi_{i-1}$.\qed 
\end{assumption}
\section{Hybrid systems modelling}
\label{sec:hybrid_model}
\begin{figure}[!t]
\centering
\begin{tikzpicture}[
  >=Stealth, thick,
  veh/.style={draw, fill=black!5, rectangle,
              minimum height=2.6em, minimum width=5em,
              font=\footnotesize, align=center, inner sep=3pt},
  comm/.style={draw, rectangle, minimum height=2em, minimum width=4em,
               font=\scriptsize, align=center, inner sep=2pt},
  dcomm/.style={comm, dashed},
  mode/.style={draw, circle, thick, minimum size=2.2em,
               font=\scriptsize, inner sep=0pt},
  lbl/.style={font=\scriptsize},
  sub/.style={font=\tiny, text=black!55},
]

\node[veh] (pred) at (0, 0) {Vehicle $i{-}1$};
\node[veh] (ego)  at (5.6, 0) {Vehicle $i$\\[-1pt]{\scriptsize (CACC)}};

\draw[->] (pred) -- node[above, lbl] {$\chi_{i-1}$} (ego);
\draw[->] (ego.east) -- ++(0.6, 0) node[right, lbl] {$\chi_i$};

\node[comm]  (et)  at (0, -1.8)   {\scriptsize Event\\[-2pt]\scriptsize trigger};
\node[dcomm] (ch)  at (2.8, -1.8) {\scriptsize Channel\\[0pt]\scriptsize $v_k^{i-1} \sim \mu$};
\node[comm]  (zoh) at (5.6, -1.8) {\scriptsize ZOH};

\draw[->] (pred.south) -- node[left, lbl] {$u_{i-1}$} (et.north);
\draw[->] (zoh.north) -- node[right, lbl] {$\hat{u}_{i-1}$} (ego.south);

\draw[->] (et) -- (ch);
\draw[->] (ch) -- (zoh);

\node[sub] at (3.9, -0.9) {$e_{u,i} = \hat{u}_{i-1} - u_{i-1}$};




\end{tikzpicture}
\caption{Per-vehicle-pair architecture. 
}
\label{fig:shs}
\end{figure}

We adopt the SHS framework 
of~\cite{Teel2013,TeelHespanha2015,Schlotterbeck2024}. A stochastic 
hybrid system on~$\R^n$ is
\begin{equation}\label{eq:shs}
\mathcal{H}\colon \begin{cases}
  \dot{\xi} = F(\xi), & \xi \in \mathcal{C}, \\
  \xi^+ = G(\xi, v), & \xi \in \mathcal{D}, \; v \sim \mu,
\end{cases}
\end{equation}
with flow set~$\mathcal{C}$, jump set~$\mathcal{D}$, flow map~$F$, 
jump map~$G$, and a probability measure~$\mu$ on~$\R^m$ governing 
the random input~$v$ at each jump.

We model the platoon NCS for vehicle pair $(i{-}1, i)$ as a 
stochastic hybrid system with two operating modes, as illustrated 
in Fig.~\ref{fig:shs}. The augmented state is
\begin{align*}\label{eq:xi}
    \xi_i &:= \bigl(\xtilde_i,\, e_{u_{i-1}},\, \tau_{i-1},\, l_{i-1},\, \bar{d}_{i-1}\bigr) \in \mathbb{X}_i, \\
    \mathbb{X}_i &:= \mathbb{R}^7 \times \mathbb{R} \times \mathbb{R}_{\ge 0} \times \{0,1\} \times \mathbb{R}_{\ge 0}
\end{align*}
where $\tau_{i-1} \ge 0$ is a timer that counts time since the 
last transmission event, $l_{i-1} \in \{0,1\}$ is the operating 
mode, and $\bar{d}_{i-1} \ge 0$ is the delay value sampled 
from~$\mu$ at the most recent transmission jump 
($l_{i-1}{=}0 \to 1$) and stored in the state 
so that the SHS timer can track when the 
update~\eqref{eq:uhat_update} is due. The two modes are:
\begin{itemize}
  \item \textbf{Mode $l_{i-1}{=}0$ (inter-transmission):} the 
  system flows until a new transmission 
  is triggered according to ~\eqref{eq:tk}.
  \item \textbf{Mode $l_{i-1}{=}1$ (delay pending):} the system 
  flows until $\tau_{i-1}$ reaches~$\bar{d}_{i-1}$, at which point 
  the update~\eqref{eq:uhat_update} is applied.
\end{itemize}

The per-vehicle-pair SHS $\mathcal{H}_i$, driven by the predecessor 
output $\chi_{i-1}$ as an external input, takes the form
\begin{equation}\label{eq:shs_i}
\mathcal{H}_i\colon \begin{cases}
  \dot{\xi}_i = F_i(\xi_i, \chi_{i-1}), & \xi_i \in \mathcal{C}_i, \\
  \xi_i^+ = G_i(\xi_i, v^{i-1}), & \xi_i \in \mathcal{D}_i, \; 
  v^{i-1} \sim \mu,
\end{cases}
\end{equation}
where $v^{i-1}$ denotes the transmission delay on the 
link $(i{-}1)\to i$. 
The flow map $F_i$, jump map $G_i$, and sets $\mathcal{C}_i$, 
$\mathcal{D}_i$ are defined below.

\emph{Flow dynamics ($\xi_i \in \mathcal{C}_i$):} The flow map is
\begin{equation*}\label{eq:flowmap}
F_i(\xi_i, \chi_{i-1}) := \begin{pmatrix} A_{11}\xtilde_i + 
A_{12}e_{u_{i-1}} + A_{13}\chi_{i-1} \\ 
-\dot{u}_{i-1} \\ 1 \\ 0 \\ 0 \end{pmatrix}.
\end{equation*}
The flow set is
\begin{multline*}\label{eq:flowset}
\mathcal{C}_i := \bigl\{\xi_i \in \mathbb{X}_i \mid l_{i-1}{=}0 
\wedge \tau_{i-1} \le \bar{\tau}_s \wedge 
\bigl(\tau_{i-1} < \bar{\tau}_m \vee \\ \Gamma(\xi_i) \ge 0\bigr) 
\bigr\}
\cup\; \bigl\{\xi_i \in \mathbb{X}_i \mid l_{i-1}{=}1 \wedge 
\tau_{i-1} \le \bar{d}_{i-1}\bigr\}.
\end{multline*}
where $\Gamma(\xi_i)$ is the event-triggering function~\eqref{eq:Gamma}. Here we slightly abuse notation by expressing its dependence directly on the augmented state $\xi_i$ rather than on time $t$.

\emph{Jump dynamics ($\xi_i \in \mathcal{D}_i$):} The jump set is
\begin{multline*}
\mathcal{D}_i := \bigl\{\xi_i \in \mathbb{X}_i \mid l_{i-1}{=}0 
\wedge \tau_{i-1} \ge \bar{\tau}_m \wedge 
\bigl(\tau_{i-1} = \bar{\tau}_s \vee \\ \Gamma(\xi_i) \le 0\bigr)
\bigr\} 
\cup\; \bigl\{\xi_i \in \mathbb{X}_i \mid l_{i-1}{=}1 \wedge 
\tau_{i-1} = \bar{d}_{i-1}\bigr\},
\end{multline*}
with jump map
\begin{equation*}
G_i(\xi_i, v^{i-1}) := \begin{cases}
\bigl(\xtilde_i,\; e_{u_{i-1}},\; 0,\; 1,\; v^{i-1}\bigr)^\top 
& \text{if } l_{i-1} = 0, \\
\bigl(\xtilde_i,\; 0,\; \tau_{i-1},\; 0,\; 0\bigr)^\top 
& \text{if } l_{i-1} = 1,
\end{cases}
\end{equation*}
where $v^{i-1} \sim \mu$ is drawn at each transmission $l_{i-1}{=}0 \to 1$.

\begin{remark}
Zeno behavior is precluded: each update jump 
($l_{i-1}{=}1 \to 0$) resets 
$e_{u_{i-1}}^+ = 0$, $\tau_{i-1}^+ = 0$, $l_{i-1}^+ {=} 0$, 
and the MIET guard $\tau_{i-1} \ge \bar{\tau}_m$ prevents any 
transmission jump until $\tau_{i-1} = \bar{\tau}_m > 0$. Thus 
consecutive transmissions are separated by at least $\bar{\tau}_m$ 
seconds, regardless of the delay.
\end{remark}

\section{Main Results}
\label{sec:main}

\subsection{String stability notion}

Having modeled each vehicle pair as the SHS $\mathcal H_i$
in~\eqref{eq:shs_i}, we now introduce the stochastic counterpart of the
pairwise $\Ltwo$-gain condition used for deterministic string stability.
For the pair $(i{-}1,i)$, the predecessor signal $\chi_{i-1}$ is viewed
as the input and $\chi_i$ as the output, while the randomness enters
through the delay process on the communication link $(i{-}1)\to i$.

\begin{definition}[Expected $\Ltwo$ string stability]\label{def:ss_prob}
The hybrid system $\mathcal H_i$ in~\eqref{eq:shs_i} is said to be
expected $\Ltwo$-stable from input $\chi_{i-1}$ to output $\chi_i$
with gain $\theta$ if, for zero initial condition $\xi_i(0)=0$,
\begin{equation}\label{eq:ss_prob}
\E\!\left[\|\chi_i\|_{\Ltwo,[0,T]}^2 \mid \chi_{i-1}\right]
\le \theta^2 \|\chi_{i-1}\|_{\Ltwo,[0,T]}^2
\end{equation}
for all $T>0$ and every realization of $\chi_{i-1}$ with finite
$\Ltwo$ energy. The platoon is expected $\Ltwo$ string stable if this
property holds with $\theta \le 1$ for every vehicle pair $(i{-}1,i)$.
\end{definition}
The conditional expectation in~\eqref{eq:ss_prob} is taken with respect
to the local delay process only. Thus, for a fixed realization of the
predecessor signal $\chi_{i-1}$, the definition measures the expected
energy amplification induced by the stochastic communication channel.


\subsection{Sufficient conditions and main theorem}
We now provide sufficient conditions for the expected $\Ltwo$ string
stability property in Definition~\ref{def:ss_prob}. The certificate has
two components. The first is a quadratic dissipation inequality for the
continuous-time dynamics of each vehicle pair. The second is a
distributional condition ensuring that the Lyapunov function does not
increase in expectation across stochastic communication jumps.
\begin{condition}[String stability LMI]\label{cond:lmi}
There exist $P=P^T\succeq 0$, scalars $\nu>0$, $\gamma_l>0$, and
$\epsilon\ge 0$ such that
\begin{equation}\label{eq:lmi}
M :=
\begin{bmatrix}
M_{11} & M_{12} & M_{13} \\
\star & M_{22} & 0 \\
\star & \star & M_{33}
\end{bmatrix}
\preceq 0,
\end{equation}
where $M_{11} = \He(PA_{11}) + \nu C_z^T C_z
+ \Bigl(\rho+\tfrac{1}{h^2}\Bigr) C^T C$, $M_{12} = PA_{12} + \nu C_z^T D_z$, $M_{13} = PA_{13} - \tfrac{1}{h^2} C^T$, $M_{22} = \nu D_z^T D_z - \gamma_l^2$, $M_{33} = \tfrac{1}{h^2} - (1+\epsilon)\nu$. \qed
\end{condition}
Condition~\ref{cond:lmi} yields a dissipativity estimate with local
gain $\theta=\sqrt{1+\epsilon}$. Hence, to certify string stability in
the sense of Definition~\ref{def:ss_prob}, we will require the LMI to
hold with $\epsilon=0$, which gives $\theta=1$. The scalar $\gamma_l$
quantifies the effect of the network-induced error and determines the
timer-dependent comparison functions used to handle delays.

\begin{assumption}\label{ass:phi_functions}
Given the scalar $\gamma_l>0$ from Condition~\ref{cond:lmi}, there exist
design parameters $\gamma_0,\gamma_1>0$ and comparison functions
$\phi_0:[0,\bar{\tau}_s]\to(0,\infty)$ and
$\psi_1:[0,\bar v]\times[0,\bar v]\to(0,\infty)$ satisfying
\begin{align}
\dot{\phi}_0(\tau)
&= -\gamma_0\phi_0^2(\tau)-\gamma_l^2/\gamma_0,
\quad \phi_0(0)>0, \label{eq:phi_riccati}\\
\partial_\tau \psi_1(\tau,\bar d)
&= -\gamma_1\psi_1^2(\tau,\bar d)-\gamma_l^2/\gamma_1,
\quad \psi_1(0,\bar d)>0, \label{eq:psi1_riccati}
\end{align}
with $\phi_0(\tau)>0$ for all $\tau\in[0,\bar{\tau}_s]$ and
$\psi_1(\tau,\bar d)>0$ for all $\tau\in[0,\bar d]$.\qed 
\end{assumption}
The function $\phi_0$ weights the network-induced error during the
inter-transmission mode $l_{i-1}=0$, whereas $\psi_1$ weights the same
error during the delay-pending mode $l_{i-1}=1$. These Riccati
comparison functions are inspired by emulation-based networked control
analysis and are used here to compensate for the growth of the
network-induced error during flows, see e.g., \citep{Heemels2010,Schlotterbeck2024}. The solution of~\eqref{eq:phi_riccati} is
\begin{equation}\label{eq:riccati_solution}
\phi_0(\tau) = \tfrac{\gamma_l}{\gamma_0}\tan\!\bigl(\arctan(\phi_0(0)\gamma_0/\gamma_l) 
- \gamma_l\tau\bigr).
\end{equation}
Thus, $\phi_0$ decreases from $\phi_0(0)$ and remains positive on
$[0,\bar{\tau}_s]$ only if $\bar{\tau}_s$ is smaller than its first zero.
In the limiting case $\phi_0(0)\to\infty$, this gives the hard timer
bound $\bar{\tau}_s<\pi/(2\gamma_l)$. The same Riccati structure applies
to $\psi_1$ in the delay-pending mode, which yields the corresponding
support restriction $\bar v<\pi/(2\gamma_l)$ for the delay distribution.

\begin{condition}[Distributional delay condition]\label{cond:dist}
The MATI $\bar{\tau}_s$ and delay support bound $\bar v$ in Assumption~\ref{ass:delay_dist}
satisfy $\bar{\tau}_s,\bar v<\pi/(2\gamma_l)$. Moreover, the initial weighting satisfies
\begin{equation}\label{eq:dist_integral}
\gamma_1 \int_0^{\bar v} \psi_1(0,v)\,\mu(dv)
\le
\gamma_0\phi_0(\bar{\tau}_s).
\end{equation}
\end{condition}
Condition~\ref{cond:dist} is where the delay distribution enters the
certificate. It ensures that the Lyapunov function does not increase in
expectation when a transmission jump occurs and a random delay is drawn. Condition~\ref{cond:dist} still depends on the design choice
$\psi_1(0,\bar d)$. To obtain a directly verifiable test, we choose the smallest initial value that keeps $\psi_1(\tau,\bar d)$ positive on $[0,\bar d]$, namely $\psi_1(0,\bar d) = (\gamma_l/\gamma_1)\tan(\gamma_l\bar d)$. This choice maximizes the admissible delay range for each $\bar d$, since any larger $\psi_1(0,\bar d)$ would shrink the interval on which $\psi_1$ remains positive. With this choice, \eqref{eq:dist_integral} becomes
\begin{align}\label{eq:dist_tan}
\gamma_l\,\E\!\left[\tan(\gamma_l v)\right]
\le
\gamma_0\phi_0(\bar{\tau}_s).
\end{align}
Taking $\phi_0(0)\to\infty$ in~\eqref{eq:riccati_solution} gives $\gamma_0\phi_0(\bar{\tau}_s)
=
\frac{\gamma_l}{\tan(\gamma_l\bar{\tau}_s)}$, and therefore the directly verifiable sufficient condition
\begin{equation}\label{eq:cond_2_easy}
\E\!\left[\tan(\gamma_l v)\right]
\le
\frac{1}{\tan(\gamma_l\bar{\tau}_s)}.
\end{equation}
Unlike deterministic MAD conditions, this test depends on the full delay
distribution $\mu$, not only on its support. Distributions concentrated
near zero have small $\E[\tan(\gamma_l v)]$ and may satisfy
\eqref{eq:dist_tan} even when their support extends beyond the
deterministic MAD; more spread-out distributions, or distributions that
place significant probability mass near $\pi/(2\gamma_l)$, are harder to
certify.

\begin{remark}\label{rem:dist_shape}
Condition~\ref{cond:dist} is distribution-dependent in the sense that two
delay laws with the same support may lead to different feasibility
outcomes. This contrasts with deterministic worst-case MAD conditions \citep{Heemels2010,Dolk2017},
which depend only on an upper bound for the delay. The hard support
restriction $\bar v<\pi/(2\gamma_l)$ is a consequence of the Riccati
timer limit, whereas the integral condition~\eqref{eq:dist_integral}
captures how probability mass is distributed within that support.
\end{remark}

\ifFullVersion
\else
  We are now ready to state the main result. Due to space constraints, the proof is deferred to the appendix of the extended version \cite{aguilera2026expected}.
\fi

\begin{theorem}[Expected $\Ltwo$ string stability]\label{thm:main}
Consider the SHS $\mathcal H_i$ in~\eqref{eq:shs_i} under
Assumptions~\ref{ass:delay_dist}--\ref{ass:phi_functions}. Suppose that
Condition~\ref{cond:lmi} holds with $\epsilon=0$ and that
Condition~\ref{cond:dist} holds. Then $\mathcal H_i$ is expected
$\Ltwo$-stable from $\chi_{i-1}$ to $\chi_i$ with gain at most one.
Consequently, the platoon is expected $\Ltwo$ string stable in the sense
of Definition~\ref{def:ss_prob}.
\end{theorem}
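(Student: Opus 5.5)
We will build a mode-dependent storage function for $\mathcal H_i$ and show that it is a stochastic dissipation certificate. Flows will be handled by Condition~\ref{cond:lmi} together with the Riccati dynamics of Assumption~\ref{ass:phi_functions}. Jumps will be handled in expectation by Condition~\ref{cond:dist}, after which we integrate and take expectations via a Dynkin-type argument. Concretely, set
\begin{equation*}
W(\xi_i) := \xtilde_i^\top P \xtilde_i + \begin{cases} \gamma_0\phi_0(\tau_{i-1})\, e_{u_{i-1}}^2, & l_{i-1}=0,\\ \gamma_1\psi_1(\tau_{i-1},\bar d_{i-1})\, e_{u_{i-1}}^2, & l_{i-1}=1, \end{cases}
\end{equation*}
which is nonnegative because $P\succeq 0$ and $\phi_0,\psi_1>0$ on their domains. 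Positivity on these domains is guaranteed by $\bar\tau_s,\bar v<\pi/(2\gamma_l)$ in Condition~\ref{cond:dist} and by Assumption~\ref{ass:phi_functions}.

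\textbf{Flow step.} Along flows we compute $\dot W$. Since $\dot e_{u_{i-1}}=-\dot u_{i-1}=\tfrac1h(u_{i-1}-\chi_{i-1})=\tfrac1h(C\xtilde_i-\chi_{i-1})$, the quadratic-form part gives
\[
\He(PA_{11})\text{-terms} + 2\xtilde_i^\top P(A_{12}e_{u_{i-1}}+A_{13}\chi_{i-1}).
\]
The timer-weighted part gives $\gamma_0\dot\phi_0 e^2 + 2\gamma_0\phi_0 e\cdot\tfrac1h(C\xtilde_i-\chi_{i-1})$. We apply Young's inequality $2\gamma_0\phi_0 e y\le \gamma_0^2\phi_0^2e^2+y^2$ with $y=\tfrac1h(C\xtilde_i-\chi_{i-1})$, and the Riccati identity $\gamma_0\dot\phi_0=-\gamma_0^2\phi_0^2-\gamma_l^2$ cancels the $\phi_0^2$ term. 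What remains is $-\gamma_l^2e^2 + \tfrac1{h^2}\|C\xtilde_i-\chi_{i-1}\|^2$, and the mode $l_{i-1}=1$ is identical with $(\gamma_1,\psi_1)$. Expanding $\tfrac1{h^2}\|C\xtilde_i-\chi_{i-1}\|^2$ produces exactly the $\tfrac1{h^2}C^\top C$, $-\tfrac1{h^2}C^\top$ and $\tfrac1{h^2}$ entries of $M$. The remaining $\rho C^\top C$ in $M_{11}$ is compensated by the triggering term: in mode $0$ on the flow set where $\Gamma\ge0$ we add $\sigma$-weighted slack, or we simply note it only makes $M$ more conservative. Condition~\ref{cond:lmi} with $\epsilon=0$, multiplied by $[\xtilde_i;e;\chi_{i-1}]$ and using $\chi_i=C_z\xtilde_i+D_ze$, then yields
\[
\dot W \le -\nu\chi_i^2 + \nu\chi_{i-1}^2 .
\]
We will double check the bookkeeping of the $\rho$ term; it appears with a positive sign in $M_{11}$, so it is pure slack and is harmless.

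\textbf{Jump step (main obstacle).} At an update jump ($l=1\to0$) we have $e^+=0$, so $W^+\le W$ deterministically. At a transmission jump ($l=0\to1$), $\tau$ is at most $\bar\tau_s$, $e$ is unchanged, $\tau^+=0$, and $\bar d^+=v\sim\mu$. Conditioning on the pre-jump state and using independence of $v$ from the trajectory (Assumption~\ref{ass:independence}), we get
\[
\E[W^+\mid\xi_i]=\xtilde_i^\top P\xtilde_i+\gamma_1\Bigl(\int\psi_1(0,v)\mu(dv)\Bigr)e^2 .
\]
Condition~\ref{cond:dist} gives $\gamma_1\int\psi_1(0,v)\mu(dv)\le\gamma_0\phi_0(\bar\tau_s)\le\gamma_0\phi_0(\tau)$, because $\phi_0$ is decreasing. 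Hence $\E[W^+\mid\xi_i]\le W$. One subtlety is that the timer is not reset at the update jump in $G_i$: $\tau^+$ is kept, and mode $0$ then resumes with $\tau$ already at $\bar d$. This is consistent, since $\phi_0$ is still used on $[0,\bar\tau_s]$, and Assumption~\ref{ass:delay_dist} ($\bar v<\bar\tau_s$) ensures the domain is respected.

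\textbf{Conclusion.} Combining the two steps, $W(\xi_i(t))+\nu\int_0^t(\chi_i^2-\chi_{i-1}^2)\,ds$ is a supermartingale with respect to the jump filtration. Rigorously, we sum the flow integrals over hybrid time intervals; Zeno is excluded by the MIET, so only finitely many jumps occur on $[0,T]$. We then apply the tower property at each jump, conditioned on the fixed realization of $\chi_{i-1}$. With $\xi_i(0)=0$ and $W\ge0$, taking expectations and dividing by $\nu>0$ gives
\[
\E\bigl[\|\chi_i\|^2_{\Ltwo,[0,T]}\mid\chi_{i-1}\bigr]\le\|\chi_{i-1}\|^2_{\Ltwo,[0,T]},
\]
that is, $\theta=1$. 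Applying this to each pair $i$ yields string stability per Definition~\ref{def:ss_prob}. The main technical care lies in justifying the expectation across the random number of jumps, for which we would follow the SHS Lyapunov arguments of \cite{Schlotterbeck2024}.
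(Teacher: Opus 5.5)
Your proposal is correct and follows essentially the same route as the paper. You use the same mode-dependent storage function and the same flow bound: your Young's inequality step is exactly the paper's completion of the square, the Riccati ODE cancels the $e_{u_{i-1}}^2$ coefficient, and the $\rho C^\top C$ term is pure slack, so the triggering condition $\Gamma\ge 0$ is never needed. The expected-jump argument via Condition~\ref{cond:dist} and monotonicity of $\phi_0$ is the same, as is the telescoping/tower-property conclusion that the paper packages as Lemma~\ref{lem:lyap_ugasp}; your remark that the update jump keeps $\tau_{i-1}$ matches the jump map $G_i$ and is harmless because $e_{u_{i-1}}^+=0$ there.
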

\ifFullVersion
    \emph{Proof.} See Appendix \ref{app:proof}.\hfill$\blacksquare$
\fi


\section{Numerical Examples}
\label{sec:numerical}
Theorem~\ref{thm:main} provides a 
\emph{sufficient} certificate: if the LMI and distributional condition hold, expected $\Ltwo$ string stability is guaranteed. The simulations illustrate how the certificate depends on the delay distribution and vehicle parameters, and validate the bounds via Monte Carlo. The LMI is solved using Python~3.11 with NumPy 
and CVXPY (SCS , \texttt{tol=1e-4}). The hybrid system is simulated using the HyEQ 
Toolbox~\citep{Sanfelice2013HyEQ}. Seeds are fixed (\texttt{seed=42}). 
We use the CACC parameters from~\cite[Sec.~VII]{Dolk2017}: $h = 0.6$\,s, $\tau_d = 0.1$\,s, $k_p = 0.2$, $k_d = 0.7$, $\rho = 0.04$, $r = 2.5$\,m, $L_i = 4$\,m. We set the velocity damping $\alpha = 0.1$ and $\epsilon = 0$. 
\subsection{The uniform distribution.}
Solving Condition~\ref{cond:lmi} yields $\gamma_l \approx 6.58$ and 
$\nu \approx 5.61$ at $\epsilon = 0$. The hard delay limit is 
$\pi/(2\gamma_l) \approx 238$\,ms, so both $\bar{\tau}_s$ and $\bar{v}$ 
must satisfy $\bar{\tau}_s, \bar{v} < 238$\,ms. We choose 
$\bar{\tau}_s = 200$\,ms close to the hard bound and 
$\bar{\tau}_m = 10$\,ms to preclude Zeno behaviour. With these parameters, the distributional threshold (RHS of \eqref{eq:cond_2_easy}) evaluates to approximately $0.26$.

First, we simulate the system under three uniform delay distributions 
$\mathcal{U}[0, \bar{v}]$ with increasing support: a \emph{light}, 
\emph{moderate}, and \emph{heavy} delay case. Table~\ref{tab:results_unif} 
summarises the distributional condition~\eqref{eq:cond_2_easy} for each 
scenario.
\begin{enumerate}
  \item \textbf{Light:} $\mu = \mathcal{U}[0, 55\,\text{ms}]$
  \item \textbf{Moderate:} $\mu = \mathcal{U}[0, 180\,\text{ms}]$
  \item \textbf{Heavy:} $\mu = \mathcal{U}[0, 500\,\text{ms}]$
\end{enumerate}
\begin{table}[!t]
\centering
\caption{Distributional feasibility for uniform delay cases. 
Hard limit: $238$\,ms.}
\label{tab:results_unif}
\setlength{\tabcolsep}{3pt}
\footnotesize
\begin{tabular}{@{}lcccc@{}}
\toprule
Scenario & $\bar{v}$ (ms) & $\E[v]$ (ms) & $\E[\tan(\gamma_l v)]$ & Feas. \\
\midrule
1. Light     & $55$  & $27.5$ & $0.19$  & \checkmark \\
2. Moderate  & $180$ & $90$   & $0.82$  & \xmark \\
3. Heavy     & $500$ & $250$  & $3328$  & \xmark \\
\midrule
\multicolumn{5}{@{}l}{LMI: $\gamma_l \!\approx\! 6.58$, $\epsilon = 0$. Threshold $(\tan(\gamma_l\bar{\tau}_s))^{-1}$: $0.26$.} \\
\bottomrule
\end{tabular}
\end{table}
As shown in Table~\ref{tab:results_unif}, only the Light scenario satisfies Condition~\eqref{eq:cond_2_easy}. The Moderate case violates the distributional condition, while the Heavy case also exceeds the support bound $\bar{v}<238$\,ms. Since $\tan(\gamma_l v)$ diverges as $\bar{v}\to \pi/(2\gamma_l)$, the corresponding integral becomes unbounded, explaining the large value in the Heavy case. Notably, the Light case allows delays up to $55$\,ms, more than twice the Dolk MAD of $26$\,ms~\cite[Sec.~VII]{Dolk2017}, showing that our framework certifies string stability in cases where the deterministic approach of~\cite{Dolk2017} would fail.

We simulate a platoon of $15$ follower vehicles under $N = 100$ Monte Carlo runs. As per Theorem \ref{thm:main}, we evaluate the mean $\mathcal{L}_2$-gain $\mathbb{E}[\|\chi_i\|^2/\|\chi_{i-1}\|^2]$ at each vehicle pair. Figure~\ref{fig:gain_vs_delay} shows the results for each distribution scenario.
The Light scenario satisfies Condition~\eqref{eq:cond_2_easy} and is empirically string stable. The Moderate case violates the condition but remains stable, illustrating the conservatism of the Lyapunov-based certificate. In contrast, the Heavy case exhibits string instability, confirming that sufficiently large delay supports can destabilize the platoon.

\begin{figure}[!t]
\centering
\includegraphics[width=\columnwidth]{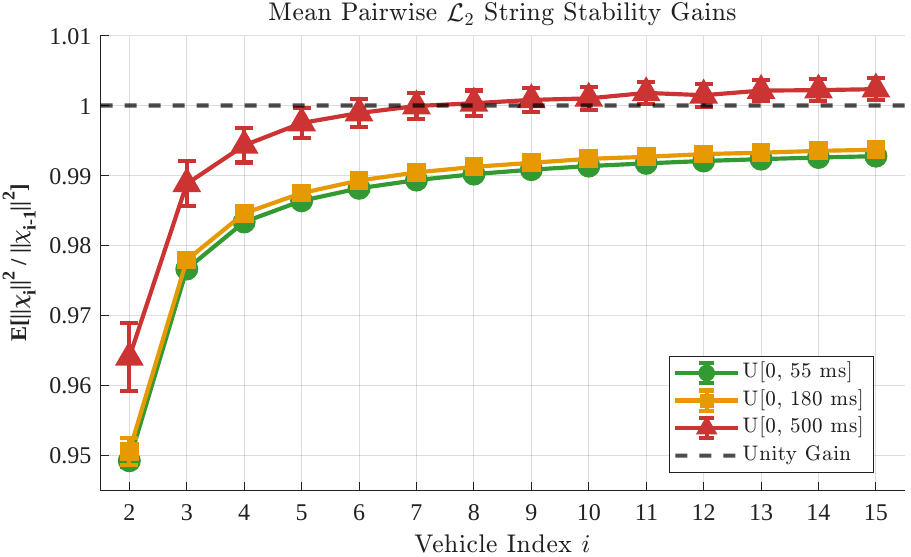}
\caption{Mean $\mathcal{L}_2$-gain $\mathbb{E}[\|\chi_i\|/\|\chi_{i-1}\|]$ 
per vehicle pair under $N = 100$ Monte Carlo runs for the scenarios of Table~\ref{tab:results_unif}.}
\label{fig:gain_vs_delay}
\end{figure}

String stability is further illustrated by the mean spacing error $e_i(t)$ over time, shown in Figure~\ref{fig:spacing_error_uniform}. In the certified Light scenario and the conservatively excluded Moderate scenario, the spacing error attenuates along the platoon, confirming string stability. In the Heavy scenario, the spacing error grows across vehicle pairs between $t = 25$ and $t = 40$ seconds, consistent with string instability.

\begin{figure}[!t]
\centering
\includegraphics[width=\columnwidth]{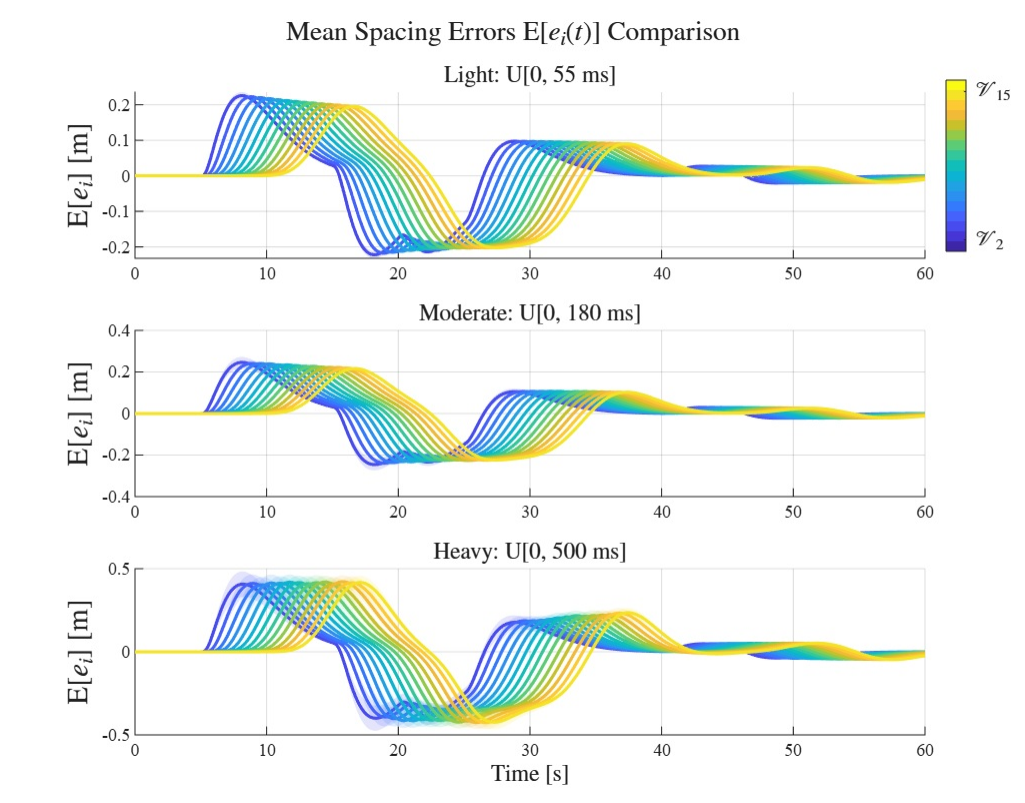}
\caption{Mean spacing error $e_i(t)$ for the scenarios of Table~\ref{tab:results_unif} ($N = 100$ Monte Carlo runs, $15$ follower vehicles).}
\label{fig:spacing_error_uniform}
\end{figure}

\subsection{Comparing different distributions.}
We now demonstrate that the certificate applies to any delay 
distribution satisfying Condition~\eqref{eq:cond_2_easy}, not only 
uniform ones. We evaluate two groups of distributions. The first group 
fixes $\bar{v} = 180$\,ms and considers three \emph{light} delay 
distributions: a truncated exponential $\mathrm{Exp}(\lambda = 28)$, 
a truncated gamma $\mathrm{Gamma}(k = 2,\, \theta = 0.018)$, and a 
Dirac point mass $\delta_{180}$ concentrating all delays at $180$\,ms. 
The second group fixes $\bar{v} = 500$\,ms and considers three 
\emph{heavy} delay distributions: a truncated exponential 
$\mathrm{Exp}(\lambda = 10)$, a truncated gamma 
$\mathrm{Gamma}(k = 2,\, \theta = 0.3)$, and a Dirac point mass 
$\delta_{500}$ concentrating all delays at $500$\,ms. 
Table~\ref{tab:results_ss_various_dist} reports 
Condition~\eqref{eq:cond_2_easy} for each case.

\begin{table}[!t]
\centering
\caption{Distributional feasibility for various delay distributions.
Hard limit: $238$\,ms.}
\label{tab:results_ss_various_dist}
\setlength{\tabcolsep}{3pt}
\footnotesize
\begin{tabular}{@{}lcccc@{}}
\toprule
Scenario & $\bar{v}$ (ms) & $\E[v]$ (ms) & $\E[\tan(\gamma_l v)]$ & Feas. \\
\midrule
1. L. $\mathrm{Exp}(\lambda = 28)$ & $180$ & $34$ & $0.25$ & \checkmark \\
2. L. $\mathrm{Gamma}(k = 2,\, \theta = 0.018)$ & $180$ & $36$ & $0.25$ & \checkmark \\
3. L. $\delta_{180}$ & $180$ & $180$ & $2.46$ & \xmark \\
4. H. $\mathrm{Exp}(\lambda = 10)$ & $500$ & $96$ & $536$ & \xmark \\
5. H. $\mathrm{Gamma}(k = 2,\, \theta = 0.3)$ & $500$ & $283$ & $3930$ & \xmark \\
6. H. $\delta_{500}$ & $500$ & $500$ & $6366$ & \xmark \\
\midrule
\multicolumn{5}{@{}l}{LMI: $\gamma_l \!\approx\! 6.58$, $\epsilon = 0$. Threshold $(\tan(\gamma_l\bar{\tau}_s))^{-1}$: $0.26$.} \\
\bottomrule
\end{tabular}
\end{table}

The light truncated exponential and gamma satisfy Condition~\eqref{eq:cond_2_easy} and are certified string stable, while the remaining four do not. This highlights a key feature of the certificate: two distributions with the same support can have very different feasibility depending on how mass concentrates near zero, allowing the framework to certify string stability even when delays exceed the deterministic MAD of~\cite{Dolk2017} with nonzero probability.

Figure~\ref{fig:gain_vs_delay_different-distribution-ss} reports the mean pairwise $\mathcal{L}_2$-gain over $N=100$ Monte Carlo runs. The light truncated exponential and gamma cases are empirically string stable, in agreement with the certificate. The light point mass $\delta_{180}$ and heavy exponential are also stable, revealing the conservatism of the sufficient condition. In contrast, the heavy gamma and heavy point mass $\delta_{500}$ become string unstable, with gains exceeding unity for some vehicle pairs. Since the heavy exponential remains stable despite sharing the same support $\bar{v}=500$\,ms as the unstable cases, the results show that string stability depends on the full delay distribution, not only on its support.

\begin{figure}[!t]
\centering
\includegraphics[width=\columnwidth]{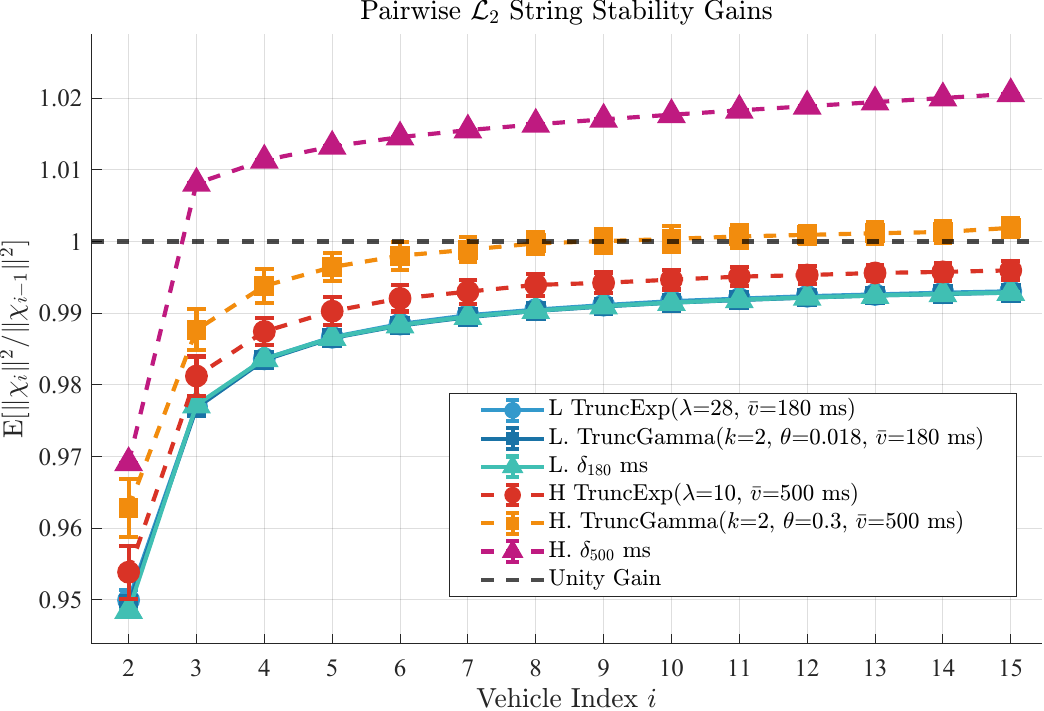}
\caption{Mean $\mathcal{L}_2$-gain per vehicle pair under $N = 100$ 
Monte Carlo runs for the delay distributions of 
Table~\ref{tab:results_ss_various_dist}.}
\label{fig:gain_vs_delay_different-distribution-ss}
\end{figure}

%
\section{Conclusion}
\label{sec:conclusion}
This paper studied expected $\mathcal{L}_2$ string stability of event-triggered CACC platoons with a human-driven leader under stochastic communication delays. We derived distribution-dependent sufficient conditions, expressed in terms of the full delay law, that certify string stability even when delays exceed deterministic admissible bounds with nonzero probability. HyEQ simulations validated the results and showed that feasibility depends on the shape of the delay distribution, not only on its support. Future work will focus on reducing the conservatism of the Lyapunov-based certificate, extending the framework to delay-adaptive event-triggering, and incorporating explicit models of human driver behavior at the leader to characterize the class of admissible driving maneuvers.

\ifFullVersion
  
\appendix

\section{Proof of Theorem~\ref{thm:main}}
\label{app:proof}
We first need the following preliminary result:
\begin{lemma}\label{lem:lyap_ugasp}
Consider an SHS~\eqref{eq:shs} with external input $w$ and output $z$,
so that $F=F(\xi,w)$ and $z=h(\xi,w)$. Let
$U:\R^n\to\R_{\ge 0}$ and define the supply rate
\[
s(w,z):=\nu\bigl[\theta^2\|w\|^2-\|z\|^2\bigr],
\]
where $\nu>0$ and $\theta\ge 0$. Suppose that, for every fixed input
$w\in\Ltwo[0,T]$,
\begin{align}
\langle \nabla U(\xi),F(\xi,w)\rangle
&\le s(w,z), \qquad \forall \xi\in\mathcal C, \label{eq:lyap_flow}\\
\E_v\!\left[U(G(\xi,v))\right]
&\le U(\xi), \qquad \forall \xi\in\mathcal D. \label{eq:lyap_jump}
\end{align}
Then, for every $T>0$,
\[
\E_v\!\left[\|z\|_{\Ltwo,[0,T]}^2\right]
\le
\theta^2\|w\|_{\Ltwo,[0,T]}^2
+
\frac{U(\xi(0))}{\nu},
\]
where the expectation is taken only with respect to the random jumps of
the SHS.
\end{lemma}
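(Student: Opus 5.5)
The plan is a stochastic version of the standard dissipation-integration argument. Fix $T>0$ and a realization of the input $w\in\Ltwo[0,T]$. Since Zeno behavior is excluded (consecutive transmission jumps are separated by at least $\bar\tau_m$, so only finitely many jumps occur on $[0,T]$), every sample path is a hybrid arc with flow intervals $[t_j,t_{j+1}]$ separated by jumps at times $t_{j+1}$, $j=0,\dots,J-1$, with $J$ bounded by a deterministic constant of order $2T/\bar\tau_m+2$. Along each flow interval I will integrate \eqref{eq:lyap_flow}. Because $U$ is differentiable and $\xi$ is absolutely continuous on the interval, this gives
\[
U(\xi(t_{j+1},j))-U(\xi(t_j,j))\le \int_{t_j}^{t_{j+1}} \nu\bigl[\theta^2\|w(s)\|^2-\|z(s)\|^2\bigr]\,ds .
\]
Summing over $j$ and telescoping across jumps yields, pathwise,
\[
U(\xi(T))+\nu\|z\|^2_{\Ltwo,[0,T]}\le U(\xi(0))+\nu\theta^2\|w\|^2_{\Ltwo,[0,T]}+\sum_{j}\Delta_j ,
\]
where $\Delta_j:=U(\xi(t_{j+1},j+1))-U(\xi(t_{j+1},j))$ is the jump increment of $U$.

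Next I take expectations over the delay sequence. The key step is to show $\E[\Delta_j]\le 0$. I would do this by conditioning on the natural filtration $\mathcal F_j$ generated by the random inputs $v_1,\dots,v_j$ drawn before jump $j+1$. Given the fixed input $w$, the pre-jump state $\xi(t_{j+1},j)$ is $\mathcal F_j$-measurable. The fresh draw $v_{j+1}\sim\mu$ is independent of $\mathcal F_j$ (by the i.i.d. assumption and Assumption~\ref{ass:independence}). Therefore
\[
\E[\Delta_j\mid\mathcal F_j]=\E_v\bigl[U(G(\xi,v))\bigr]\big|_{\xi=\xi(t_{j+1},j)}-U(\xi(t_{j+1},j))\le 0
\]
by \eqref{eq:lyap_jump}, since the pre-jump state lies in $\mathcal D$. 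The tower property then gives $\E[\Delta_j]\le 0$. To handle the random number of jumps, I write $\sum_j\Delta_j=\sum_{j=0}^{\bar J}\Delta_j\mathbf 1_{\{j<J\}}$ with the deterministic bound $\bar J$. The event $\{j<J\}$ is $\mathcal F_j$-measurable, because whether a further jump occurs before $T$ is determined by the path up to that point. So each term still has nonpositive conditional expectation.

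Finally, I drop the nonnegative term $\E[U(\xi(T))]\ge0$ and divide by $\nu$, which gives the claim. I expect the main obstacle to be this measurability and integrability bookkeeping in the jump step: showing that the pre-jump state and the indicator are adapted to the filtration, and that the $\Delta_j$ are integrable so the expectations are well defined. Integrability should follow from linear flows over a bounded horizon with bounded jump maps, so finitely many jumps keep the state bounded by a constant times $\|\xi(0)\|+\|w\|_{\Ltwo}$. If needed, I would cite the SHS solution concept of \cite{Teel2013,Schlotterbeck2024} for measurability of random solutions rather than rebuilding it.
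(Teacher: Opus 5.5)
Your proposal is correct and follows essentially the same route as the paper's proof: integrate the flow inequality on each inter-jump interval, telescope across jumps, obtain $\E[\Delta_j]\le 0$ by conditioning on the filtration generated by the delay draws and using the tower property, then drop $\E[U(\xi(T))]\ge 0$ and divide by $\nu$. Where the paper only notes that the number of jumps is finite almost surely, you bound it deterministically via the MIET and write the jump sum with $\mathcal F_j$-measurable indicators $\mathbf 1_{\{j<J\}}$; this properly justifies summing expectations over a random number of jumps.
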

%
%
\begin{pf}
Let $(\mathcal{F}_t)_{t\ge 0}$ be the natural filtration generated by the
i.i.d.\ delay sequence $\{v_k\}$ (Assumption~\ref{ass:independence}). Each
jump time $t_k$ is a stopping time, the pre-jump state $\xi(t_k^-)$ is
$\mathcal{F}_{t_k^-}$-measurable, and the draw $v_k$ is independent of
$\mathcal{F}_{t_k^-}$. Hence the jump bound~\eqref{eq:lyap_jump} holds in
the adapted sense,
$\E\!\left[U(\xi(t_k^+)) \mid \mathcal{F}_{t_k^-}\right] \le U(\xi(t_k^-))$.
Since Zeno behavior
is precluded, the number of jumps $K$ on $[0,T]$, $0 < t_1 < \cdots < t_K < T$, is finite almost surely.

 On each flow interval $[t_j, t_{j+1})$ (with $t_0 := 0$, $t_{K+1} := T$), integrating~\eqref{eq:lyap_flow} gives
$U(\xi(t_{j+1}^-)) - U(\xi(t_j^+)) \le \nu \int_{t_j}^{t_{j+1}} \bigl[\theta^2\|w\|^2 - \|z\|^2\bigr] dt$.
Summing over all $K{+}1$ flow intervals yields a telescoping sum:
$U(\xi(T)) - U(\xi(0)) + \sum_{j=1}^{K} \bigl[U(\xi(t_j^-)) - U(\xi(t_j^+))\bigr] \le \nu \int_0^T \bigl[\theta^2\|w\|^2 - \|z\|^2\bigr] dt$.
Taking expectations and using~\eqref{eq:lyap_jump} at each jump, the tower property gives the unconditional inequality
$\E[U(\xi(t_j^+))] \le \E[U(\xi(t_j^-))]$, so each term $\E[U(\xi(t_j^-)) - U(\xi(t_j^+))] \ge 0$. Dropping these non-negative terms and using $\E[U(\xi(T))] \ge 0$ gives $-U(\xi(0)) \le \nu\theta^2\,\|w\|_{\Ltwo,[0,T]}^2 - \nu\,\E[\|z\|_{\Ltwo,[0,T]}^2]$, which rearranges to the stated bound.\qed 
\end{pf}

The proof employs the following composite Lyapunov function
\begin{multline}\label{eq:lyap_composite}
U(\xi_i) = \xtilde_i^T P \xtilde_i + 
(1{-}l_{i-1})\,\gamma_0\,\phi_0(\tau_{i-1})\,\|e_{u_{i-1}}\|^2 \\
+ l_{i-1}\,\gamma_1\,\psi_1(\tau_{i-1}, \bar{d}_{i-1})\,
\|e_{u_{i-1}}\|^2,
\end{multline}
which merges the quadratic $\Ltwo$-gain certificate 
$V(\xtilde_i) = \xtilde_i^T P \xtilde_i$ from~\cite{Dolk2017} with 
the comparison-function-weighted error terms 
of~\cite{Schlotterbeck2024}. The mode~$l_{i-1}{=}1$ 
weighting~$\psi_1(\tau_{i-1}, \bar{d}_{i-1})$ depends on the drawn 
delay~$\bar{d}_{i-1}$, enabling the distributional 
condition~\eqref{eq:dist_integral}. 

We now verify the two conditions of Lemma~\ref{lem:lyap_ugasp}: flow dissipation~\eqref{eq:lyap_flow} and expected jump non-increase~\eqref{eq:lyap_jump}.

\emph{Step~1 (Flow analysis).}
We treat both modes uniformly. For mode $l_{i-1}{=}k$ ($k \in \{0,1\}$), write $\varphi_k$ for the active comparison function ($\phi_0$ in mode~$0$, $\psi_1(\cdot, \bar{d})$ in mode~$1$). The active Lyapunov terms are $V = \xtilde_i^T P \xtilde_i$ and $\gamma_k\varphi_k(\tau)\|e_{u_{i-1}}\|^2$. Computing $\dot{U}$:
\begin{equation}\label{eq:Udot_general}
\dot{U} = 2\xtilde_i^T P\dot{\xtilde}_i + \gamma_k\bigl[\dot{\varphi}_k(\tau)\|e_{u_{i-1}}\|^2 + 2\varphi_k(\tau)\,e_{u_{i-1}}^T \dot{e}_{u_{i-1}}\bigr].
\end{equation}
The LMI~\eqref{eq:lmi} certifies the following dissipation inequality
for $V(\xtilde_i)=\xtilde_i^T P\xtilde_i$. Defining
$\theta^2:=1+\epsilon$, we have
\begin{multline}\label{eq:Vdot_bound}
\dot V =
2\xtilde_i^T P\dot{\xtilde}_i
\le
-\rho\|u_{i-1}\|^2
-\tfrac{1}{h^2}\|u_{i-1}-\chi_{i-1}\|^2 \\
+ \nu\bigl[\theta^2\|\chi_{i-1}\|^2-\|\chi_i\|^2\bigr]
+\gamma_l^2\|e_{u_{i-1}}\|^2,
\end{multline}
where $u_{i-1}=C\xtilde_i$ and
$\chi_i=C_z\xtilde_i+D_z e_{u_{i-1}}$.
Since $e_{u_{i-1}} \in \R$ is scalar and $\dot{e}_{u_{i-1}} = (1/h)(u_{i-1} {-} \chi_{i-1})$, the cross-term satisfies $|e_{u_{i-1}}\dot{e}_{u_{i-1}}| = |e_{u_{i-1}}| \cdot |\dot{e}_{u_{i-1}}|$ exactly. Defining $H := h^{-1}|u_{i-1} {-} \chi_{i-1}|$ and $W := |e_{u_{i-1}}|$, we complete the square:
\begin{equation*}
-H^2 + 2\gamma_k\varphi_k W H = -\bigl(H - \gamma_k\varphi_k W\bigr)^2 + \gamma_k^2\varphi_k^2 W^2.
\end{equation*}
Substituting into~\eqref{eq:Udot_general} and using~\eqref{eq:Vdot_bound}:
\begin{multline*}
\dot{U} \le -\bigl(H {-} \gamma_k\varphi_k W\bigr)^2 \\
 + \bigl(\gamma_k^2\varphi_k^2 + \gamma_l^2 + \gamma_k\dot{\varphi}_k\bigr)\|e_{u_{i-1}}\|^2 + s(w,z) - \rho\|u_{i-1}\|^2.
\end{multline*}
The Riccati ODE~\eqref{eq:phi_riccati}--\eqref{eq:psi1_riccati} is designed so that the $\|e_{u_{i-1}}\|^2$ coefficient vanishes identically:
\begin{equation*}
\gamma_k^2\varphi_k^2 + \gamma_l^2 + \gamma_k\dot{\varphi}_k = \gamma_k^2\varphi_k^2 + \gamma_l^2 + \gamma_k(-\gamma_k\varphi_k^2 - \gamma_l^2/\gamma_k) = 0.
\end{equation*}
The combined flow bound in both modes is therefore
\begin{multline*}
\dot{U} \le
-\bigl(H {-} \gamma_k\varphi_k W\bigr)^2
+ \nu\bigl[\theta^2\|\chi_{i-1}\|^2 - \|\chi_i\|^2\bigr] \\
-\rho\|u_{i-1}\|^2
\le s(w,z),
\end{multline*}
verifying condition~\eqref{eq:lyap_flow} of Lemma~\ref{lem:lyap_ugasp}.

\emph{Step~2 (Jump analysis).}

\emph{Type~1 jump ($l_{i-1}{=}0 \to l_{i-1}{=}1$):}
The delay $v \sim \mu$ is drawn, $\bar{d}^+ = v$, $\tau^+ = 0$, and
$l_{i-1}^+ = 1$. Now $U$ after the jump depends on~$v$ through
$\psi_1(0, v)$:
\begin{align*}
\E_v\!\bigl[U(\xi_i^+)\bigr] 
&= V(\xtilde_i) 
+ \gamma_1\,\E_v\!\bigl[\psi_1(0, v)\bigr]\,
\|e_{u_{i-1}}\|^2,\\
U(\xi_i^-) 
&= V(\xtilde_i) 
+ \gamma_0\,\phi_0(\tau)\,\|e_{u_{i-1}}\|^2. \nonumber
\end{align*}
At a Type~1 jump, the jump guard implies $\tau\in[0,\bar{\tau}_s]$.
Since $\phi_0$ is decreasing on this interval, we have
$\phi_0(\bar{\tau}_s)\le \phi_0(\tau)$. Hence, by~\eqref{eq:dist_integral}, $\gamma_1\E_v[\psi_1(0,v)]
\le \gamma_0\phi_0(\bar{\tau}_s)
\le \gamma_0\phi_0(\tau)$, and therefore $\E_v[U(\xi_i^+)]\le U(\xi_i^-)$. This is the step where
the delay distribution enters the analysis.

\emph{Type~2 jump ($l_{i-1}{=}1 \to l_{i-1}{=}0$, at $\tau = \bar{d}$):}
This jump is deterministic (no random draw). Since $e_{u_{i-1}}^+ = 0$ and $l_{i-1}^+ = 0$:
\begin{equation*}
U(\xi_i^+) = V(\xtilde_i) \le V(\xtilde_i) + \gamma_1\psi_1(\bar{d},\bar{d})\|e_{u_{i-1}}(\bar{d})\|^2 = U(\xi_i^-).
\end{equation*}
Both jump types satisfy condition~\eqref{eq:lyap_jump}.

\emph{Step~3 (Conclusion).}
Fix an arbitrary realization of the predecessor signal
$\chi_{i-1}$ such that $\chi_{i-1}\in\Ltwo[0,T]$. By
Assumption~\ref{ass:independence}, the delay process on the link
$(i{-}1)\to i$ is independent of this realized trajectory. Hence, for
this fixed input, the expectation in Lemma~\ref{lem:lyap_ugasp} is taken
only with respect to the local delay process.

Steps~1 and~2 verify the hypotheses of Lemma~\ref{lem:lyap_ugasp}, with
$w=\chi_{i-1}$ and $z=\chi_i$. Therefore,
\[
\E\!\left[\|\chi_i\|_{\Ltwo,[0,T]}^2 \mid \chi_{i-1}\right]
\le
\theta^2\|\chi_{i-1}\|_{\Ltwo,[0,T]}^2
+
\frac{U(\xi_i(0))}{\nu},
\]
where $\theta^2=1+\epsilon$. If Condition~\ref{cond:lmi} holds with
$\epsilon=0$, then $\theta=1$. For zero initial conditions,
$U(\xi_i(0))=0$, and hence
\[
\E\!\left[\|\chi_i\|_{\Ltwo,[0,T]}^2 \mid \chi_{i-1}\right]
\le
\|\chi_{i-1}\|_{\Ltwo,[0,T]}^2.
\]
Thus, $\mathcal H_i$ is expected $\Ltwo$-stable from $\chi_{i-1}$ to
$\chi_i$ with gain at most one. Since the argument holds for every
vehicle pair $(i{-}1,i)$, the platoon is expected $\Ltwo$ string stable.
\hfill$\blacksquare$

\fi

\bibliography{Francisco/references}

\end{document}